\newcommand{\diff}{\ensuremath\mathrm{d}}
\newcommand{\Id}{\ensuremath{\mathbb{I}}}
\newcommand{\e}{\ensuremath{\mathrm{e}}}
\newcommand{\transp}{\ensuremath{\scriptscriptstyle T}}
\newcommand*{\Tr}{\operatorname{Tr}}
\providecommand{\myvec}[1]{\ensuremath{\boldsymbol{#1}}}
\providecommand{\ff}{\ensuremath{\myvec{f}}}
\providecommand{\nn}{\ensuremath{\myvec{n}}}
\providecommand{\rr}{\ensuremath{\myvec{r}}}
\providecommand{\ttheta}{\ensuremath{\myvec{\theta}}}
\providecommand{\mmu}{\ensuremath{\myvec{\mu}}}
\providecommand{\pphi}{\ensuremath{\myvec{\phi}}}
\providecommand{\vvarphi}{\ensuremath{\myvec{\varphi}}}
\providecommand{\calD}{\ensuremath{\mathcal{D}}}
\providecommand{\calF}{\ensuremath{\mathcal{F}}}
\providecommand{\calH}{\ensuremath{\mathcal{H}}}
\providecommand{\calL}{\ensuremath{\mathcal{L}}}
\providecommand{\calM}{\ensuremath{\mathcal{M}}}
\providecommand{\calN}{\ensuremath{\mathcal{N}}}
\providecommand{\calP}{\ensuremath{\mathcal{P}}}
\providecommand{\calR}{\ensuremath{\mathcal{R}}}
\providecommand{\calU}{\ensuremath{\mathcal{U}}}
\providecommand{\calV}{\ensuremath{\mathcal{V}}}
\providecommand{\bbE}{\ensuremath{\mathbb{E}}}
\providecommand{\bbI}{\ensuremath{\mathbb{I}}}
\def\diff{\ensuremath\mathrm{d}}
\newtheorem{observation}{Observation}
\newcommand{\trace}[1]{\Tr\{#1\}}
\begin{document}
	\title{A variational toolbox for quantum multi-parameter estimation}
	\date{\today}
	
	\author{Johannes Jakob Meyer}
	\affiliation{Dahlem Center for Complex Quantum Systems, Freie Universität Berlin, 14195 Berlin, Germany}
	%\orcid{0000-0003-1533-8015}
	
	\author{Johannes Borregaard}
	\affiliation{Qutech and Kavli Institute of Nanoscience, Delft University of Technology, 2628 CJ Delft, The Netherlands}
	
	\affiliation{Mathematical Sciences,
    Universitetsparken 5, 2100 K\o{}benhavn \O{}, 
    Matematik E, Denmark}

	\author{Jens Eisert}
	\affiliation{Dahlem Center for Complex Quantum Systems, Freie Universit{\"a}t Berlin, 14195 Berlin, Germany}
	
\begin{abstract}
With an ever-expanding ecosystem of noisy and intermediate-scale quantum devices, exploring their possible applications is a rapidly growing field of quantum information science. In this work, we demonstrate that variational quantum algorithms feasible on such devices address a challenge central to the field of quantum metrology: The identification of near-optimal probes and measurement operators for noisy multi-parameter estimation problems. We first introduce a general framework which allows for sequential updates of variational parameters to improve probe states and measurements and is widely applicable to both discrete and continuous-variable settings. We then demonstrate the practical functioning of the approach through numerical simulations, showcasing how tailored probes and measurements improve over standard methods in the noisy regime. Along the way, we prove the validity of a general parameter-shift rule for noisy evolutions, expected to be of general interest in variational quantum algorithms. In our approach, we advocate the mindset of quantum-aided design, exploiting quantum technology to learn close to optimal, experimentally feasible quantum metrology protocols.
\end{abstract}
\maketitle

Quantum metrology exploits non-classical effects to extend the sensitivity of sensing and parameter estimation methods beyond classical limits. 
The achievable precision in quantum metrology depends on the interplay of quantum correlations of the probe states, the unavoidable quantum noise present in the scheme, the geometry of the parameters to be estimated and the information that is gained by possibly intricate measurements
\cite{giovannetti2006quantum,Demkowicz-Dobrzanski2012}. Carefully designing suitable quantum probes and measurements is therefore a frontier of quantum metrology~\cite{Pezze2020twistingnoiseaway} and can lead to significantly improved sensitivity as has been demonstrated in numerous works~\cite{Aasi2013,Liu2015,Facon2016,Chabuda,Luca2018rmp}. 
Such methods have recently been generalized to a multi-parameter setting where a set of spatially distributed sensors is used to simultaneously estimate a number of distinct parameters or a function thereof~\cite{proctor2018multiparameter,Qian2019,Sekatski2020,Guo2020,xia2020}, setting which is relevant for a broad range of applications ranging from nanoscale NMR~\cite{DeVience2015} to networks of atomic clocks~\cite{Komar2014}.

Designing close to optimal quantum protocols for a specific metrological task is, however, highly challenging. Identifying suitable probes and measurement schemes can be a classically intractable task even in the absence of errors. It requires optimizing over quantum states of high dimension, which soon becomes infeasible in practice due to the exponential growth of dimension with the system size.
To add insult to injury, the solution of this task is even less obvious when realistic constraints such as experimental imperfections and decoherence are taken into account: Such decoherence processes, however, are at the heart of the matter when designing realistic quantum metrological protocols in the first place.
This observation gives rise to the insight that in many relevant and meaningful scenarios, one cannot help but resort to quantum tools to find such protocols.

Fortunately, the increasing body of tools originating from the study of \emph{near-term noisy intermediate scale quantum (NISQ) computers}~\cite{preskill_quantum_2018} may come to the rescue here: This work provides a \emph{variational quantum algorithm} for the optimization of quantum sensing protocols, applicable to a wide range of realistic metrology problems in the general multi-parameter setting. It can be implemented directly on controllable quantum devices, thereby circumventing the need to simulate large quantum systems classically.
Variational quantum algorithms combine hybrid quantum-classical algorithms~\cite{McClean_2016} with variational approaches that have been used in the context of quantum metrology to identify good probes in the single parameter setting~\cite{koczor2019variational} and for generating optimal spin squeezed states in optical tweezer arrays~\cite{kaubruegger2019variational}.

\begin{figure}
    \centering
    \includegraphics[width=\columnwidth]{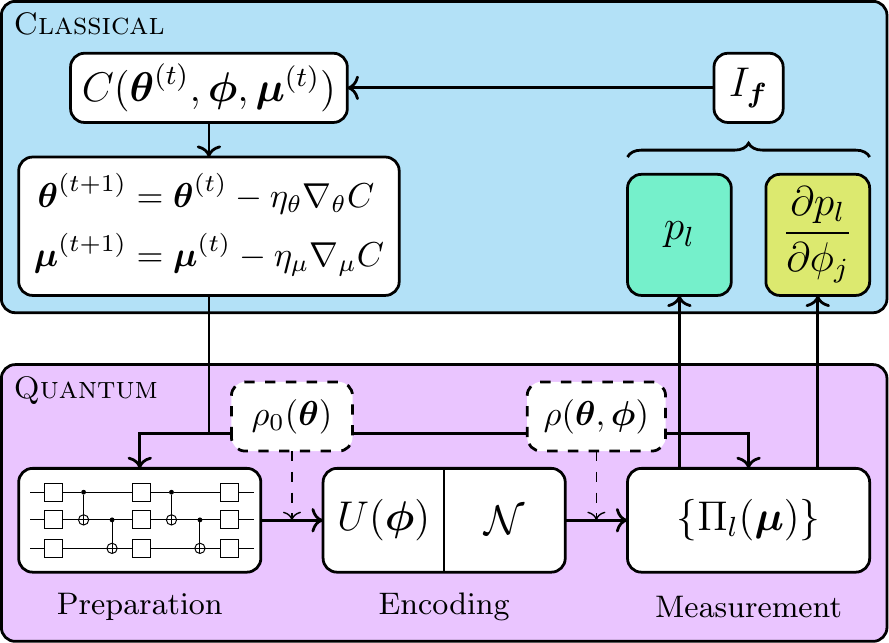}
    \caption{Illustration of the hybrid approach presented in this work to variationally optimize the probe state and measurement for a noisy unitary multi-parameter estimation problem. The various parameters are defined in the main text.}
    \label{fig:process_illustration}
\end{figure}

We consider the generic task of estimating a multi-variate function of the sensing parameters and use the \emph{multi-parameter Cram{\'e}r-Rao bound} to quantify the quality of probe state and measurement as it provides for a saturable lower bound on the quality of any unbiased estimator. 
Importantly, we describe how this is extracted experimentally based on a novel insight regarding the \emph{parameter-shift rule}~\cite{schuld2019evaluating} in a noisy setting, an idea that has arisen in the context of quantum optimal control~\cite{li2017hybrid} and learning applications of quantum circuits \cite{mitarai2018quantum} and which we develop further, expected to be interesting in its own right.
Due to the variational nature of our quantum algorithm -- involving a quantum circuit in its core instead of a possibly inefficient classical prescription -- we can tailor estimation strategies to the specific experimental capabilities and dominating noise sources. Our techniques are very general and can be applied to both discrete variable architectures based on atomic systems~\cite{Bernien2017,Zhang2017} or superconducting qubits~\cite{Arute2019} and continuous variable architectures based on Gaussian light sources~\cite{Larsen2019,Asavanant2019}. 
As such, we add quantum metrology to the possible applications of NISQ devices while at the same time providing a solution to the challenging problem of optimizing quantum metrological prescriptions.

\emph{The structure of the algorithm.}
A high-level description of our algorithm is sketched in 
Fig.~\ref{fig:process_illustration}. In each step of the algorithm, a probe quantum state $\rho_0(\ttheta)$ is generated by a quantum circuit parametrized by $\ttheta$. It undergoes a \emph{noisy unitary} transformation, consisting of a unitary evolution $U(\pphi)$ that encodes the parameters, which are the arguments of $\ff$, the multi-variate function to be estimated, and an arbitrary, possibly non-local, quantum channel $\calN$ accounting for the system noise. 
The state of the system before the measurement is
\begin{align}
    \calN[U(\pphi) \rho_0(\ttheta) U^{\dagger}(\pphi)] = \rho(\ttheta, \pphi).
\end{align}
The subsequent measurement is most generally described by a parametrized positive-operator valued measure (POVM) $\calM = \{ \Pi_l(\mmu) \}$ resulting in measurement output probabilities
\begin{align}
    p_l = \Tr \{ \Pi_l(\mmu) \rho(\ttheta, \pphi) \},
\end{align}
with $\mmu$ being the POVM parametrization. This concludes the part of the protocol that runs on the quantum device.
This step is repeated a number of times to get accurate estimates of the outcome probabilities. These are then used to classically compute a cost function quantifying the estimation quality of the probe state. From this, both the state preparation circuit and the measurement procedure are updated to further increase the estimation quality based on gradient-descent techniques~\cite{sweke2019stochastic}.
The entire procedure is iterated until a minimum is reached, yielding a close to optimal sensing protocol within the variational manifold of probe state preparation and measurement procedure.

\emph{Cost function.}
The central challenge when constructing a variational quantum algorithm is to identify a cost function that captures the nature of the problem and that can be effectively evaluated and differentiated on actual quantum hardware. 
We want to quantify the performance of an estimator for a general multi-variate function $\ff(\pphi)$ of the parameters $\pphi$. We therefore consider the \emph{classical Fisher information matrix (CFIM)}~\cite{lehmann2006theory} $I_{\ff}=J^{\transp} I_{\pphi} J$ where $J$ is the Jacobian of $\ff$ with entries $J_{j,k} = \partial f_j / \partial \phi_k$ and 
\begin{align}\label{eqn:cfim}
    [I_{\pphi}]_{j,k} & \coloneqq \sum_l \frac{(\partial_j p_l)(\partial_k p_l)}{p_l},
\end{align}
where we have used $\partial_j$ as a shorthand notation for $\partial/\partial \phi_j$.
The CFIM gives a fundamental lower bound to the covariance matrix of any unbiased estimator $\hat{\ff}$ of $\ff(\pphi)$ according to the 
\emph{Cram{\'e}r-Rao bound}
\begin{align}\label{eqn:crb}
    \operatornamewithlimits{Cov}[\hat{\ff}] \geq \frac{1}{n}I_{\ff}^{-1},
\end{align}
where $n$ is the number of samples. We note that the Cram{\'e}r-Rao bound can, in principle, always be saturated in the limit $n \to \infty$ by \emph{maximum-likelihood estimation (MLE)} \cite{fisher1922mathematical}.  

For single-parameter estimation, the 
CFIM reduces to a scalar quantity and the 
Cram{\'e}r-Rao bound can directly be used as a cost function. For the multi-variate setting, we will follow the approach of Ref.~\cite{proctor2018multiparameter} and apply a positive semi-definite weighting matrix $W$ to both sides of the Cram{\'e}r-Rao bound \eqref{eqn:crb} and perform a trace to obtain the scalar inequality
\begin{align}\label{eqn:weighted_crb}
    \trace{W \operatorname{Cov}[\hat\ff]} \geq \frac{1}{n}\trace{W I_{\ff}^{-1}} .
\end{align}
The right-hand side is the natural choice for the \emph{cost function}
\begin{align}\label{eqn:cost_function}
    C_W(\ttheta, \pphi, \mmu) &\coloneqq \trace{W I_{\ff}^{-1}}.
\end{align}
While the outcome probabilities $\{p_l\}$ can be readily estimated through repeated measurements, the CFIM also contains their derivatives. 

These derivatives, and the derivatives of the cost function necessary for gradient-based optimization, are computed using the \emph{parameter-shift rule}, which enables the calculation of analytic derivatives of expectation values on quantum hardware~\cite{schuld2019evaluating} for a range of fundamental quantum gates in discrete variable and Gaussian bosonic circuits. Each derivative is computed from the expectation values evaluated at two different points in parameter space.
Recent work~\cite{banchi2020measuring} generalized this approach and showed that analytic gradients for more intricate evolutions can be obtained via stochastic methods, even reaching beyond the noisy unitary model considered in this work. 
While initially derived for unitary quantum circuits, we prove in Appendix~\ref{sec:proof_obs_noisy_parameter_shift} that the parameter-shift rule and its stochastic generalization extend to noisy unitaries. 
As the most common elementary operations considered both in the discrete variable and the Gaussian continuous variable case admit a parameter-shift rule, we will in the following assume that the state preparation and the POVM in our scheme can be \emph{trained} using the parameter-shift rule. 

\begin{figure}
    \centering
    \includegraphics[width=\columnwidth]{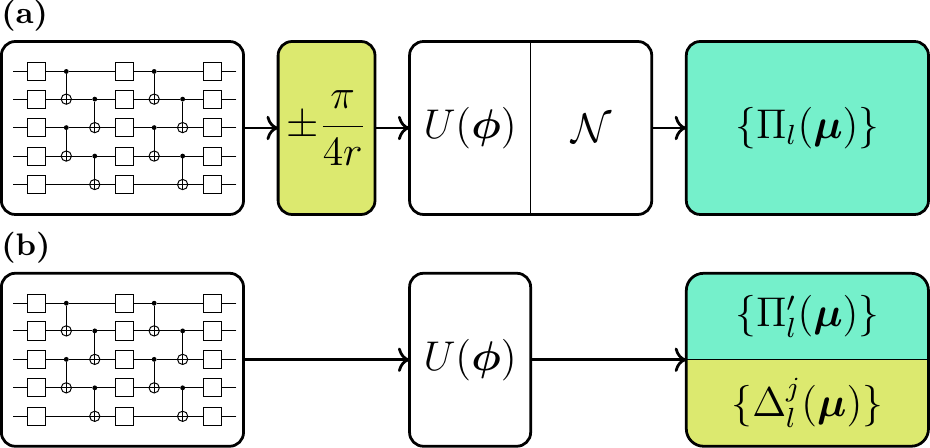}
    \caption{Realization of the approach proposed in Fig.\  \ref{fig:process_illustration}. a) On a sensing platform or during its emulation, the output probabilities are directly computed from the POVM measurements. The derivatives are computed using the parameter-shift rule. b) When simulating on a quantum computer, the action of the noise channel is simulated by altering the POVM. The derivatives are computed by measuring the expectation values of a set of operators, used in training the network.}
    \label{fig:process_realization}
\end{figure}

\emph{Realizing the optimization}.
The difficulty of running the variational algorithm naturally depends on the capabilities of the quantum device available. If we have the opportunity to perform the parametrized state preparation and measurement on the sensing platform, we can use it directly to run the algorithm.
This is of course desirable since it circumvents the need to simulate the encoding and noise channel on another device.   
We can even do without a tomography of the noise channel as only the unitary components
as such need to be known to good precision
and have to be certified and benchmarked \cite{Review}.
The procedure is illustrated in Fig.~\ref{fig:process_realization}(a). 
The output probabilities $\{p_l\}$ are directly accessible from the outputs of the experiment. To obtain estimates of the derivatives $\{\partial_j p_l\}$, it is, however, necessary to be able to re-run the same evolution with the parameter in question shifted by a fixed amount in order to use the parameter-shift rule. Whether this is possible depends on the exact nature of the unitary in the encoding evolution (see Appendix~\ref{sec:app_param_shift_lab} for details).

If the necessary level of control is not achievable on the sensing platform directly, another quantum system with more control can be used to run the algorithm and optimize a sensing strategy.
In particular, NISQ devices with high levels of control could be employed. We stress, however, that the inherent errors of the NISQ device need to be sufficiently low to provide for a faithful emulation of the sensing platform. The noise of the sensing platform can be recreated either through the use of ancillary qubits or repeated sampling of unitaries (see Appendix~\ref{sec:app_emulation}).

If the noise process in question is hard to emulate on the NISQ device, or if the parameter-shift rule cannot be implemented for the encoding process, part of the algorithm can be moved from the quantum device to the classical hardware. This can alleviate the requirements for the NISQ device for a, in many realistic cases, modest overhead on the classical computation. The procedure in this situation is shown in \ref{fig:process_realization}(b). 

We rewrite the output probabilities of the quantum circuit as
\begin{align}
    p_l = \trace{\Pi_l \calN[\rho]} = \trace{\calN^{\dagger}[\Pi_l] \rho} = \Tr\{\Pi_l'\rho \},
\end{align}
where $\calN^{\dagger}$ is the adjoint channel of $\calN$ and we defined the noisy POVM operators $\Pi_l' = \calN^{\dagger}[\Pi_l]$. 

To compute the derivatives of the output probabilities, we use results on unitary estimation problems~\cite{liu2015quantum}. The derivative of the probe after the noiseless part of the evolution $\rho(\pphi) = U(\pphi) \rho_0 U^{\dagger}(\pphi)$ is given by
\begin{align}\label{eqn:derivative_h_ops}
    \partial_j \rho = -i [\calH_j, \rho]
\end{align}
with the Hermitian generators
\begin{align}\label{eqn:def_h_ops}
    \calH_j \coloneqq -i U^{\dagger}(\pphi) (\partial_j U(\pphi)).
\end{align}
For a time evolution under commuting Hamiltonians $H_j$, the generators are identical to the Hamiltonians $\calH_j = H_j$.

Using Eq.~\eqref{eqn:derivative_h_ops}, the derivatives of the output probabilities become
\begin{align}\label{eqn:derivative_probability_errorfree}
    \partial_j p_l &= \trace{\Pi_l' \partial_j \rho} 
    = -i \trace{\Pi_l' \calH_j \rho - \Pi_l' \rho \calH_j}\\
    &= -i \trace{[\Pi_l', \calH_j]\rho}.\nonumber
\end{align}
Defining the derivative operators
\begin{align}\label{eqn:def_derivative_ops}
    \Delta_l^j \coloneqq -i [\Pi_l', \calH_j]
\end{align}
allows us to recast the CFIM as a function of expectation values, similar in form to Eq.~\eqref{eqn:cfim}:
\begin{align}\label{eqn:cfim_expectation_values}
    [I_{\pphi}]_{j,k} = \sum_l \frac{\langle \Delta_l^j \rangle \langle \Delta_l^k \rangle}{\langle \Pi_l' \rangle},
\end{align}
where all expectation values are evaluated on the noise-free encoded state $\rho(\pphi)$.

We note that the operators $\{\Pi_l'\}$ and $\{\Delta_l^j\}$ have to be precomputed and stored classically in this hybrid algorithm. This requires an efficient classical representation of both the error channel and the POVM. This is feasible in practice, as the POVMs usually considered have unit rank. The rank of the noisy POVM operators is then maximally the Kraus rank of $\calN$. Locality of measurements and noise further simplifies the representation of the POVM operators. If necessary, they can also be stored in a format of \emph{tensor networks}~\cite{Chabuda}.

Because the CFIM needs to be evaluated on the noise-free encoded state, we need to parametrize the POVM $\{\Pi_l(\mmu)'\}$ classically. If the POVM is parametrized in terms of quantum operations, we can again use the parameter-shift rule to compute the derivatives with respect to the parameters $\{\mu_k\}$.

\emph{Numerical experiments.}
To showcase possible applications of our variational approach, we numerically investigate two exemplary noisy estimation problems. The experiments have been implemented using the PennyLane library for quantum machine learning~\cite{bergholm2018pennylane} and the QuEST mixed state simulator~\cite{jones2019quest}.

We first employ the setting of Ramsay spectroscopy, a widely used technique for quantum metrology with atoms and ions. The metrological parameters are phase shifts $\pphi$ arising from the interaction of probe ions modeled as two-level systems with an external driving force. We follow Ref.~\cite{huelga1997improvement} and model the noise in the parameter encoding as local dephasing with dephasing probability $p$. We consider a pure probe state and a projective measurement, where the computational basis is parametrized by local unitaries.
\begin{figure}
    \centering
    \includegraphics{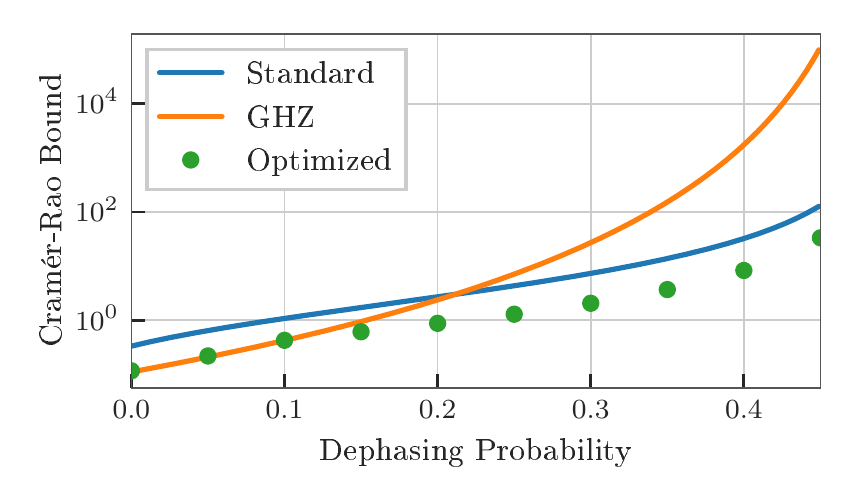}
    \caption{Semi-logarithmic plot of the Cram{\'e}r-Rao bound for average phase sensing over the dephasing probability of (blue) the standard Ramsey probe and (orange) a GHZ probe. Each green dot marks the bound for a probe optimized for the particular dephasing probability. The phase shifts were fixed at the optimal spot for GHZ sensing at $\phi_j = \frac{\pi}{6}$.
    We recover the optimality of the GHZ probe in the noiseless case and find improved sensing protocols in the noisy case.}
    \label{fig:ramsay_results}
\end{figure}
We optimize a sensing strategy for a system of three ions and instead of estimating the phases independently, we analyze the task of estimating their average $\frac{1}{N}\sum_j \phi_j$. We first reproduce known results to validate the performance of our approach. Ref.~\cite{proctor2018multiparameter} has shown that generalized GHZ states are optimal when the encoding is noise-free. We have emulated the sensing task with $\phi_j = \frac{\pi}{6}$, the sweet spot for GHZ sensing and performed co-optimization of state preparation and measurement.
The parametrization was chosen so that it can produce any three qubit quantum state and any local measurement. Fig.~\ref{fig:ramsay_results} shows that we recover the optimality of the GHZ sensing procedure with a measurement in the Hadamard basis for $p=0$. At increasing noise levels, the advantage of GHZ sensing disappears as expected~\cite{huelga1997improvement} and we find sensing procedures that outperform both GHZ and standard Ramsay spectroscopy.

As a second task, we will apply our algorithm to the setting of spin imaging. This is at the heart of nanoscale NMR, which has a wide range of applications within chemistry and biological imaging. In particular, nanoscale sensors based on \emph{nitrogen-vacancy (NV) centers} have shown great potential~\cite{DeVience2015,Shi2015,Aslam2017,Lovchinsky2016,Schmitt2017} and we will consider the task of determining the position of a spin by triangulation with three NV center probes. In short, the dipole-dipole interaction between the NV centers and the spin shifts the energy levels of the NV centers resulting in a position dependent phase shift. From measurements of this phase, the position of the spin can be determined. We neglect the dipole-dipole interaction between the three NV centers assuming that they can be decoupled be appropriate pulse sequences~\cite{Abobeih2019}. Furthermore, we  perform a secular approximation, simplifying the interaction to a Pauli rotation about the symmetry axis of the NV centers~\cite{sushkov2014magnetic}. We model the encoding noise as local dephasing with a fixed probability of $p_e = \num{0.1}$.
We refer to Appendix~\ref{sec:app_numerics} for a more detailed derivation of our model.
\begin{figure}
    \centering
    \includegraphics{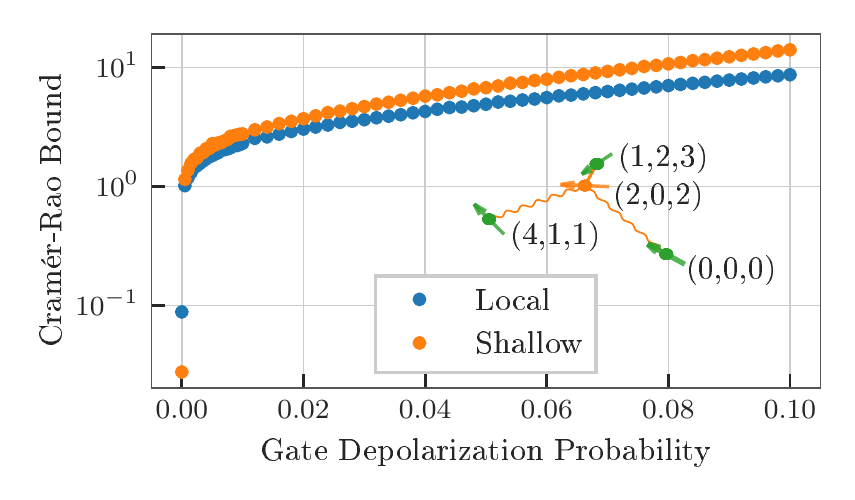}
    \caption{Semi-logarithmic plot of the weighted Cram{\'e}r-Rao bound for NV trilateration over the gate depolarization probability of (blue) local probes and (orange) shallow entangled probes. Each dot marks the bound for a probe optimized for the particular gate depolarization probability. The measurement is always local. The inset shows the positions of the NV centers and the target spin used in the experiment. The entangling operations only yield an advantage for vanishing gate noise.}
    \label{fig:nv_results}
\end{figure} 
We use our algorithm to study the influence of gate noise, which we model as depolarizing noise with depolarization probability $p_g$, on the quality of the sensing protocol. We compare a local probe to a probe that has two additional entangling operations while the measurement is always local. The exact structure of the sensing protocols is detailed in Appendix~\ref{sec:app_numerics}. Fig.~\ref{fig:nv_results} shows that the shallow entangled probe yields an advantage in the absence of gate noise. But even at very small gate noise levels, the additional noise from the entangling operations will make the entangled probe inferior to the local probe. Although we only consider one specific ansatz for an entangled probe, this sheds light on the importance of gate noise in realistic probe preparations.

\emph{Summary and outlook.}
In this work, we have introduced a variational approach that can be applied to a wide variety of metrological problems to optimize multi-parameter estimation schemes both on sensing platforms and on quantum devices. These approaches 
can be tailored to specifically match the available hardware and the sensing problem in question. Instead of addressing the challenges of noisy quantum metrology by classically computing optimal probes and measurements, we have shown how to learn variational parameters of a quantum circuit
for this purpose. The machinery used can also be
used to improve upon an already good guess.

While the approach outlined in the main text is widely applicable, covering arbitrary noise models and post-processing functions, it still does not cover all possible metrological tasks. In Appendix~\ref{sec:app_ext}, we therefore present extensions of our algorithm that take into account prior knowledge about the underlying parameters, 
handle mutual time-dependence of unitary evolution and noise, 
and outline how we can benchmark our results against the ultimately attainable precision given by 
the \emph{quantum Cram{\'e}r-Rao bound}.
The results of Appendix~\ref{sec:proof_obs_noisy_parameter_shift} furthermore allow the extension of our method to the metrology of error channel parameters.
We hope that the present work contributes
to a shift in mindset -- somewhat reminiscent of the idea of
quantum computer-aided design of physical platforms \cite{kyaw2020quantum} --
so that not all sensing protocols have to be designed classically, but that one can make use of quantum technologies themselves to actually learn improved quantum protocols. 

\emph{Acknowledgements.} This work has been supported by
the BMWi (PlanQK), the MATH+ excellence cluster (EF1-7), the Templeton Foundation, and the DFG (CRC 183, B01). JB acknowledges funding from the NWO Gravitation Program Quantum Software Consortium. The authors endorse Scientific CO\textsubscript{2}nduct~\cite{conduct} and provide a CO\textsubscript{2} emission table in the appendix.

\emph{Author contributions.} J.~J.~M.~conceived the method and performed the numerical experiments. J.~B.~and J.~E.~supported research and development. All authors contributed to the manuscript.

%\bibliography{literature2}
%merlin.mbs apsrev4-1.bst 2010-07-25 4.21a (PWD, AO, DPC) hacked
%Control: key (0)
%Control: author (8) initials jnrlst
%Control: editor formatted (1) identically to author
%Control: production of article title (-1) disabled
%Control: page (0) single
%Control: year (1) truncated
%Control: production of eprint (0) enabled
%

\clearpage
\appendix
\onecolumngrid

\section{Analytic gradients for noisy quantum circuits}\label{sec:proof_obs_noisy_parameter_shift}
Variational quantum algorithms (VQAs) \cite{mcclean2016theory} use a classical feedback loop to train parametrized quantum circuits to optimize a given cost function encoding the problem of interest. They are widely considered among the most promising applications of noisy intermediate scale quantum (NISQ) devices. 
The efficient training of a VQA is key to a successful application of these algorithms. While gradient-free methods like finite-differences and simultaneous perturbation stochastic approximation (SPSA) can, in principle, be employed, Ref.~\cite{harrow2019low} has shown that gradient-based methods give rise to a provable advantage. Due to the importance of VQA optimization, the development of bespoke gradient descent methods is an active field of research with a multitude of proposals for such methods.

A workhorse of gradient-based methods is the \emph{parameter-shift} rule~\cite{li2017hybrid,mitarai2018quantum,schuld2019evaluating,sweke2019stochastic}, which enables the calculation of analytic gradients of expectation values on actual quantum hardware. This is possible for a wide range of relevant gates by evaluating the same circuit at shifted points in parameter space as
\begin{align}
    \partial_{\mu} \langle H \rangle(\mu) = r\left[ \langle H \rangle\left(\mu + \frac{\pi}{4r}\right) - \langle H \rangle\left(\mu - \frac{\pi}{4r}\right) \right],
\end{align}
where $r$ is a constant depending on the gate parametrized by $\mu$.
Recent work \cite{banchi2020measuring} has extended the reach of the parameter-shift rule to more intricate evolutions using stochastic methods.

In this appendix -- which can actually be read as a standalone note interesting
its own right -- we will confirm some folklore about the parameter-shift rule, namely that it can also be used to train \emph{noisy quantum circuits} under a reasonable noise model. We note that this is only a small step from the initial work~\cite{li2017hybrid}, but a step that we here make fully explicit. We will also note that convex combinations of quantum channels, a class comprising prominent examples like dephasing and depolarizing channels, admit a parameter-shift rule that can be of use in error channel simulations.

\emph{General prerequisites for a parameter-shift rule.}
We first reformulate the parameter-shift rule in terms of quantum channels. Let $\calN(\mu)$ be a quantum channel parametrized by $\mu$. The central requirement for a parameter-shift rule is that the channel can realize its own derivative as a linear combination
\begin{align}\label{eqn:linear_combination_gradient}
    \partial_{\mu} \calN(\mu)[\rho] = \sum_j c_j \calN(g_j(\mu))[\rho].
\end{align}
Under this assumption, any expectation value
\begin{align}
    f(\mu) = \Tr \{ H \calN(\mu)[\rho] \}
\end{align}
can be differentiated as
\begin{align}
    \partial_{\mu} f(\mu) &= \Tr \{ H \partial_{\mu}\calN(\mu)[\rho] \}   \\
    &= \sum_j c_j \Tr \{ H \calN(g_j(\mu))[\rho] \}  \nonumber \\
    &= \sum_j c_j f(g_j(\mu)). \nonumber
\end{align}
Because we have left both the input state $\rho$ and the observable $H$ arbitrary, the parameter-shift rule also holds if the channel is wedged between other channels that do not depend on the same parameter. We can simply attribute the preceding channel to the quantum state and the succeeding channel to the observable via its adjoint, to get
\begin{align}
    f(\mu) &= \Tr \{ H' (\calL \circ \calN(\mu) \circ \calR)[\rho'] \} \\
    &= \Tr \{ \calL^{\dagger}(H') \calN(\mu)[\calR[\rho']] .\nonumber
\end{align} 
This reasoning also holds for circuits with multiple parameters, as long as all parameters act in independent quantum channels. In this case, the channels that are independent of $\mu$ can be absorbed into $\calL$ and $\calR$.

When optimizing a VQA, unbiased estimates of the gradient are sufficient to guarantee convergence~\cite{sweke2019stochastic}. This spirit has successfully been taken up by Ref.~\cite{banchi2020measuring}. They exploit the fact that the gradient of a unitary evolution can be expressed as an integral, effectively yielding a parameter-shift rule with an infinite number of terms. Still, an unbiased estimator can be constructed when using Monte-Carlo integration to estimate the integral. As the integral representation is equivalent to Eq.~\eqref{eqn:linear_combination_gradient}, the above reasoning also holds for the methods developed in Ref.~\cite{banchi2020measuring}

\emph{Channels admitting a parameter-shift rule.}
The influential work of Ref.~\cite{schuld2019evaluating} has proven that a unitary evolution under a Hamiltonian with two distinct eigenvalues fulfills the above requirement for a parameter-shift rule:
\begin{observation}[Gradient as a linear combination for unitaries]\label{obs:unitary_parameter_shift}
    A unitary evolution of the form
    \begin{align}
        \calU(\mu)[\rho] &= \e^{-i\mu G} \rho \e^{i \mu G},
    \end{align}
    where the spectrum of $G$ only contains two distinct eigenvalues $\operatorname{spec}(G) = \{ \lambda_1, \lambda_2 \}$ can express its own gradient as the linear combination 
    \begin{align}
        \partial_{\mu}\calU(\mu)[\rho] =  r \left[\calU\left(\mu+\frac{\pi}{4r}\right)[\rho] - \calU\left(\mu-\frac{\pi}{4r}\right)[\rho]\right],
    \end{align}
    where $r = |\lambda_1 - \lambda_2|/2$.
\end{observation}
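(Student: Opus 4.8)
The plan is to reduce the claim to an elementary trigonometric identity, exploiting the fact that a Hermitian $G$ with only two distinct eigenvalues satisfies a quadratic minimal polynomial. First I would strip off an irrelevant global phase: setting $\bar\lambda = (\lambda_1+\lambda_2)/2$ and $\tilde G = G - \bar\lambda\,\mathbb{I}$, the scalar $e^{-i\mu\bar\lambda}$ commutes through the conjugation and cancels against its conjugate, so that $\calU(\mu)[\rho] = e^{-i\mu\tilde G}\rho\, e^{i\mu\tilde G}$ with $\operatorname{spec}(\tilde G) = \{+r,-r\}$ and $r = |\lambda_1-\lambda_2|/2$. Hence one may assume from the outset that $G^2 = r^2\,\mathbb{I}$.

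Next I would record two consequences of this reduction. Splitting the power series of $e^{-i\mu G}$ into even and odd parts and using $G^{2m} = r^{2m}\mathbb{I}$ and $G^{2m+1} = r^{2m}G$ gives the closed form $e^{-i\mu G} = \cos(\mu r)\,\mathbb{I} - i\,\tfrac{\sin(\mu r)}{r}\,G$. Separately, differentiating the conjugation directly and using that $G$ commutes with $e^{\pm i\mu G}$ yields $\partial_{\mu}\calU(\mu)[\rho] = -i\,[G,\calU(\mu)[\rho]]$. Since $\calU(\mu+\nu) = \calU(\mu)\circ\calU(\nu)$, it then suffices to prove the pointwise identity $-i[G,\sigma] = r\bigl(\calU(\tfrac{\pi}{4r})[\sigma] - \calU(-\tfrac{\pi}{4r})[\sigma]\bigr)$ for an arbitrary operator $\sigma$ (which we will take to be $\calU(\mu)[\rho]$).

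The remaining step is a short calculation. Substituting $\mu = \pm\tfrac{\pi}{4r}$ into the closed form and using $\cos(\pi/4) = \sin(\pi/4) = 1/\sqrt2$ gives $e^{\mp i\frac{\pi}{4r}G} = \tfrac{1}{\sqrt2}\bigl(\mathbb{I} \mp \tfrac{i}{r}G\bigr)$. Expanding $\calU(\tfrac{\pi}{4r})[\sigma] - \calU(-\tfrac{\pi}{4r})[\sigma]$, the terms proportional to $\sigma$ and to $G\sigma G$ cancel, leaving $\tfrac{i}{r}(\sigma G - G\sigma) = -\tfrac{i}{r}[G,\sigma]$; multiplying by $r$ reproduces $-i[G,\sigma]$, which equals $\partial_{\mu}\calU(\mu)[\rho]$. (Equivalently, one can carry the whole argument through in the spectral decomposition $G = \lambda_1 P_1 + \lambda_2 P_2$, which gives $e^{-i\mu G} = e^{-i\mu\lambda_1}P_1 + e^{-i\mu\lambda_2}P_2$ and reduces everything to identities among $P_1, P_2$.)

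There is no deep obstacle: the essential content is that a two-level generator produces a sinusoidal evolution whose period is fixed by $r$, so shifting by a quarter period turns the generator into a commutator. The only points needing care are the global-phase reduction — needed both to make $G^2$ proportional to the identity and to kill an otherwise-surviving phase factor — and bookkeeping of the constant $r$, which enters simultaneously inside the shift amount and as the overall prefactor; the particular value $\pi/(4r)$ is forced precisely by the condition $\cos(\mu r) = \sin(\mu r)$. Finally I would note that the argument is pointwise in $\mu$ and leaves $\rho$ entirely free, so it plugs directly into the ``wedging'' discussion preceding the observation and into the general criterion \eqref{eqn:linear_combination_gradient}.
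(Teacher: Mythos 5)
Your proof is correct and follows essentially the same route as the paper's: both reduce without loss of generality to a generator with spectrum $\{\pm r\}$ (discarding a global phase), write the derivative as $-i[G,\calU(\mu)[\rho]]$, and rest on the fact that $\e^{\mp i\pi G/(4r)} = \tfrac{1}{\sqrt{2}}(\Id \mp i r^{-1}G)$ together with the composition law $\calU(\mu+\nu)=\calU(\mu)\circ\calU(\nu)$. The differences are presentational only: you derive that closed form from the power series and check the final identity by direct expansion (the $\sigma$ and $G\sigma G$ terms cancelling), whereas the paper reaches the same point via the polarization-type identity $BQC^{\dagger}+\text{h.c.}=\tfrac{1}{2}\left[(B+C)Q(B+C)^{\dagger}-(B-C)Q(B-C)^{\dagger}\right]$ and a citation of Theorem 1 of Schuld et al.\ for the half-shifted gate, so your version is somewhat more self-contained but not a genuinely different argument.
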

For the sake of completeness, a proof of this observation is given at the end of this section.
The above observation is valid for any unitary evolution that is generated by a Pauli word. As the Pauli words generate the unitary group, any unitary transformation has a parametrization that supports a parameter-shift rule.
The reach of the parameter shift rules is extended beyond these simple generators in Ref.~\cite{banchi2020measuring}, thereby allowing to estimate gradients for more complicated evolutions on quantum hardware. 
We note that Ref.~\cite{schuld2019evaluating} also provides a parameter-shift rules for all elementary Gaussian bosonic operations, which means that any Gaussian evolution build from these operations can be parametrized in a way that supports a parameter-shift rule.

This work considers \emph{noisy unitaries}, consisting of a perfect unitary evolution followed by an arbitrary, but parameter-independent, noise channel,
\begin{align}
    \calV(\mu) = \calN \circ \calU(\mu).
\end{align}
Typically, the noise channel will be depolarizing or dephasing. We can also model lossy optical components where the perfect operation is followed by a pure loss channel. Another interesting application is the modeling of imperfect control. Assume that we perform a Hamiltonian evolution $U(\mu) = \e^{-i \mu G}$ but due to imperfect control, we are actually implementing $U(\mu + \chi)$ with probability $p(\chi)$. This can be formalized by a noisy unitary with noise channel
\begin{align}
    \calN(p)[\rho] = \int \diff \chi \, p(\chi) U(\chi) \rho U^{\dagger}(\chi)
\end{align}
We already argued that noise channels following the evolution in question do not impede the application of a parameter-shift rule. This means that noisy unitaries as building blocks can be differentiated by the same parameter-shift rule as the perfect unitary.

We will now move on to show that parameter-shift rules are not limited to unitary evolutions alone. Convex combinations of channels also admit a parameter-shift rule that can be exploited in simulation and emulation contexts.
We will consider the channel
\begin{align}
    \calN(p)[\rho] = (1-p) \calN_1[\rho] + p \calN_2[\rho],
\end{align}
where $\calN_1$ and $\calN_2$ are arbitrary quantum channels of compatible dimensions. Many interesting channels can indeed be captured as a convex combinations of two channels. Consider, as examples, the dephasing channel
\begin{align}
    \calN(p)[\rho] = (1-p) \rho + p Z \rho Z
\end{align}
and the depolarizing channel
\begin{align}
    \calN(p)[\rho] &= (1-p) \rho + p \left(\frac{1}{3} X \rho X + \frac{1}{3} Y \rho Y + \frac{1}{3} Z \rho Z\right).
\end{align}
We will now show that evaluating the channel at two different values of $p$ is sufficient to compute the gradient:
\begin{observation}[Gradient as a linear combination for convex combinations of channels]\label{obs:channel_interpolation_shift}
    An interpolated quantum channel of the form
    \begin{align}
        \calN(p)[\rho] = (1-p) \calN_1[\rho] + p \calN_2[\rho].
    \end{align}
    can express its own gradient independent of $p$ as the linear combination
    \begin{align}
        \partial_p \calN(p)[\rho] &= \frac{1}{q_1 - q_2}\left( \calN(q_1)[\rho] - \calN(q_2)[\rho] \right)
    \end{align}
    for all distinct $q_1, q_2 \in [0, 1]$.
\end{observation}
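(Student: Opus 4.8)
The plan is to exploit the fact that $\calN(p)$ depends \emph{affinely} on the parameter $p$, so that its derivative is a constant superoperator and any difference quotient with distinct base points reproduces it exactly. First I would write out
\begin{align}
    \calN(p)[\rho] = (1-p)\calN_1[\rho] + p\,\calN_2[\rho] = \calN_1[\rho] + p\left(\calN_2[\rho] - \calN_1[\rho]\right),
\end{align}
which exhibits $p \mapsto \calN(p)[\rho]$ as an affine map into the space of Hermitian operators (indeed into the set of valid states, by convexity of the latter). Differentiating termwise gives $\partial_p \calN(p)[\rho] = \calN_2[\rho] - \calN_1[\rho]$, manifestly independent of $p$, establishing the $p$-independence claimed in the statement.

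Next I would evaluate the proposed right-hand side. For distinct $q_1,q_2\in[0,1]$,
\begin{align}
    \calN(q_1)[\rho] - \calN(q_2)[\rho] = (q_1 - q_2)\left(\calN_2[\rho] - \calN_1[\rho]\right),
\end{align}
so dividing by $q_1-q_2$ yields exactly $\calN_2[\rho]-\calN_1[\rho] = \partial_p\calN(p)[\rho]$, which is the desired identity. This is an identity of superoperators acting on an arbitrary $\rho$, so no positivity or trace-preservation properties of $\calN_1,\calN_2$ are actually needed for the algebra — they only matter for the interpretation of $\calN(q_1),\calN(q_2)$ as channels one can implement, which is why one restricts $q_1,q_2\in[0,1]$. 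In the language of Eq.~\eqref{eqn:linear_combination_gradient}, this is the case $c_1 = 1/(q_1-q_2)$, $c_2 = -1/(q_1-q_2)$, with the "shift functions" $g_1(p)\equiv q_1$ and $g_2(p)\equiv q_2$ being constants rather than genuine shifts.

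There is no real obstacle here: the entire content is the linearity of $\calN(\cdot)$ in $p$, and the only thing worth spelling out is that — combined with the earlier observation that a parametrized channel wedged between parameter-independent channels still admits the same rule (absorbing the preceding channel into the state and the succeeding one into the observable via its adjoint) — this immediately gives a two-term parameter-shift rule for any cost function of the form $f(p) = \Tr\{H\,(\calL\circ\calN(p)\circ\calR)[\rho]\}$, which is the form relevant for error-channel simulation and emulation. I would close by remarking that the dephasing and depolarizing channels written above are instances with $\calN_1 = \mathrm{id}$, so their strength parameter can be differentiated by evaluating the circuit at just two noise levels.
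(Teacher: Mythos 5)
Your proof is correct and rests on the same idea as the paper's: the affine dependence of $\calN(p)$ on $p$ makes the derivative a constant equal to any difference quotient. The only difference is presentational — the paper phrases the same algebra as inverting the $2\times 2$ linear system relating $\bigl(\calN(q_1)[\rho],\calN(q_2)[\rho]\bigr)$ to $\bigl(\calN_1[\rho],\calN_2[\rho]\bigr)$, which makes its closing remark about extending to convex combinations of more than two channels immediate, whereas your direct difference-quotient computation is slightly more elementary for the two-channel case.
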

\begin{proof}
We rephrase the application of $\calN$ in vector notation:
\begin{align}
    \calN(p)[\rho] = \begin{pmatrix} 1-p \\p \end{pmatrix} \begin{pmatrix} \calN_1[\rho] \\ \calN_2[\rho] \end{pmatrix},
\end{align}
where the implicit product of vectors is the scalar product. The derivative is given by
\begin{align}
    \partial_p \calN(p)[\rho] = \begin{pmatrix} -1 \\ 1 \end{pmatrix} \begin{pmatrix} \calN_1[\rho] \\ \calN_2[\rho] \end{pmatrix}.
\end{align}
Now, assume we have access to two indepenent realizations of the channel at $q_1$ and $q_2$. We then get the vector of outcomes
\begin{align}
    \begin{pmatrix} \calN(q_1)[\rho] \\ \calN(q_2)[\rho]\end{pmatrix} &= \begin{pmatrix}
    1-q_1 & q_1 \\ 1-q_2 & q_2 \end{pmatrix}\begin{pmatrix} \calN_1[\rho] \\ \calN_2[\rho] \end{pmatrix}.
\end{align}
From this, we can calculate the derivative using the matrix inverse
\begin{align}
    \partial_p \calN(p)[\rho] &= \begin{pmatrix} -1 \\ 1 \end{pmatrix} \begin{pmatrix} \calN_1[\rho] \\ \calN_2[\rho] \end{pmatrix} \\
    &=\begin{pmatrix} -1 \\ 1 \end{pmatrix}\begin{pmatrix}
    1-q_1 & q_1 \\ 1-q_2 & q_2 \end{pmatrix}^{-1}\begin{pmatrix} \calN(q_1)[\rho]  \nonumber\\ \calN(q_2)[\rho]\end{pmatrix} \\
    &= \frac{1}{q_1 - q_2}\left(\calN(q_1)[\rho] - \calN(q_2)[\rho]\right).  \nonumber
\end{align}
\end{proof}
The proof technique extends analogously to channels, which are convex combinations of more than two channels.

\emph{Proof of Observation \ref{obs:unitary_parameter_shift}.}
We will now provide a proof of the parameter-shift rule for unitary gates for the sake of completeness.
The derivative of the channel output is given by
\begin{align}
    \partial_{\mu}\calU(\mu)[\rho] = -iG\e^{-i \mu G} \rho \e^{i \mu G} + \e^{-i \mu G} \rho \e^{i \mu G}iG = -i G\calU(\mu)[\rho] + \text{h.c.}
\end{align}
We now proceed as in Ref.~\cite{schuld2019evaluating} and use the identity
\begin{align}
    BQC^{\dagger} + \text{h.c.} = \frac{1}{2}\left[ (B+C)Q(B+C)^{\dagger} - (B-C)Q(B-C)^{\dagger}  \right]
\end{align}
where $B,Q,C$ are matrices and $Q$ is assumed Hermitian. We assume without loss of generality that the eigenvalues of the generator $G$ are $\pm r$, as a rescaling can always be done at the cost of an irrelevant global phase and such a rescaling leaves the difference of the eigenvalues intact. We define $r = |\lambda_1 - \lambda_2|/2$ and take $Q = \calU(\mu)[\rho]$, $B = -i r^{-1} G$, and $C = \Id$ to obtain
\begin{align}
    \partial_{\mu}\calU(\mu)[\rho] &= \frac{r}{2}\left[ (-i r^{-1} G+\Id)\calU(\mu)[\rho](-i r^{-1} G+\Id)^{\dagger} - (-i r^{-1} G-\Id)\calU(\mu)[\rho](-i r^{-1} G-\Id)^{\dagger}  \right] \\
    &=\frac{r}{2}\left[ (\Id - i r^{-1} G)\calU(\mu)[\rho](\Id - i r^{-1} G)^{\dagger} - (\Id + i r^{-1} G)\calU(\mu)[\rho](\Id + i r^{-1} G)^{\dagger} 
    \right]. \nonumber
\end{align}
Now, we make use of Ref.\ \cite[Theorem 1]{schuld2019evaluating}, which states
\begin{align}
    \calU\left(\frac{\pi}{4r}\right)[\rho] = \frac{1}{2}(\Id - i r^{-1} G)\rho(\Id - i r^{-1} G)^{\dagger},
\end{align}
to arrive at the desired result
\begin{align}
    \partial_{\mu}\calU(\mu)[\rho] &= r \left[\left(\calU\left(\frac{\pi}{4r}\right) \circ \calU(\mu)\right)[\rho] - \left(\calU\left(-\frac{\pi}{4r}\right) \circ \calU(\mu)\right)[\rho]\right]\\
    &= r \left[\calU\left(\mu+\frac{\pi}{4r}\right)[\rho] - \calU\left(\mu-\frac{\pi}{4r}\right)[\rho]\right], \nonumber
\end{align}
where we have exploited the composition rule for gates generated by the same generator. 

\section{Conditions for the parameter-shift rule}\label{sec:app_param_shift_lab}
We will now detail when we can evaluate the derivatives of the output probabilities with respect to the encoded parameters on the sensing platforms directly or by emulation.
The first necessary ingredient is control over the measured parameters, as we need to \enquote{shift} them.
This means we have to be able to re-run the evolution with the parameter in question shifted by a fixed amount. This is feasible if we emulate the parameter encoding, but usually impossible on the actual sensing platform.
However, if we are able to synthesize the evolution under the generators $\calH_j$, we can circumvent this obstacle by adding the additional shift \emph{before} the parameter encoding. This can for example be done in optics experiments, where additional phase shifts on the modes undergoing the parameter encoding can be implemented.
The noise in the state preparation, the parameter encoding and the POVM is taken care of by the approach outlined in this work, but an imperfect realization of the parameter-shift itself will still introduce some noise in the optimization process. 
%As this single operation is the only error source that is not taken into account in the optimization, we hope that the optimization is not too far off.
%\jb{Isn't this way of arguing a little weird? That because it is the only error source, we ignore it?}.\jjm{A comma got misplaced, it is the only error source we don't take into account so we hope that it doesn't destroy the optimization. I removed that line entirely.}

Given control over the measured parameters, we can use finite differences to obtain an estimate of the derivative. Superior estimates are, however, available if the encoding unitary admits a parameter-shift rule. 
In discrete variables, this is the case if the parameters $\phi_j$ are encoded by separate evolutions under Hamiltonians of the form $g_j(\pphi) H_j$ where $H_j$ only has two distinct eigenvalues. We can then always reparametrize the problem in terms of the new variables $g_j$ and use the transformation rule of the CFIM to obtain the CFIM with respect to the original parameters $\pphi$. 
In the Gaussian case, all elementary operations (phase shift, squeezing and displacement) admit a parameter-shift rule. If the parameter is encoded by separate operations of this kind, the parameter-shift rule applies.

Let $d$ be the number of metrology parameters. With either method, we need to evaluate the setup at $2d + 1$ points in parameter space to calculate the CFIM. Recent work \cite{banchi2020measuring} has extended the reach of parameter-shift like techniques by giving a method to obtain analytic gradients via stochastic methods. These can be applied to more intricate evolutions than the regular parameter-shift rule and even extend beyond the noisy unitary model in certain cases, at the cost of having to evaluate the setup at more points in parameter space. 

\section{Sensing emulation}\label{sec:app_emulation}
To emulate a sensing platform, we need to reproduce the evolutions happening in the state preparation, the POVM, and the occurring error channels. Realizing the evolutions is simple, as all unitary evolutions can be decomposed into elementary gates. The realization of error channels, on the other hand, can proceed in two ways.

First, the Stinespring dilation theorem \cite{wilde2013quantum} guarantees that all error channels can be realized by a unitary evolution on a larger system. We can thus use ancillary qubits or modes to emulate an error channel.
Second, we can emulate any mixed-unitary (also known as random-unitary) quantum channel on a parametrized quantum circuit. In such a channel, a unitary is picked from a set $\{ U_k \}$ with probability $p_k$. It can be emulated by executing the quantum circuit multiple times while repeatedly sampling random instances of the unitary. The mixed-unitary channels comprise important examples like bit-flip channels, depolarizing channels and dephasing channels. 
Viewing NISQ devices as parametrized quantum circuits is usually abstracted far from the hardware. More complicated error channels than mixed-unitary channels can be engineered when closer access to the hardware is possible, but the capabilities naturally depend strongly on the underlying technology.

\section{Extensions}\label{sec:app_ext}
The variational approach outlined in the main text can be extended in various ways. In the following, We detail three important extensions.

\subsection{Bayesian quantum metrology: Including prior knowledge}\label{sec:ext_bayesian}
The Cram{\'e}r-Rao bound \eqref{eqn:crb} is a function of the underlying parameters $\pphi$. If one has prior knowledge about their distribution $P(\pphi)$, one can derive a more general bound, the \emph{Van-Trees inequality} \cite{van2013detection}
\begin{align}\label{eqn:van_trees}
    \operatornamewithlimits{Cov}[\hat{\vvarphi}] \geq [n \bbE_P\{ I_{\pphi}\} + I_{P}]^{-1}.
\end{align}
Compared to the Cram{\'e}r-Rao inequality, we now use the expected CFIM over the distribution $P(\pphi)$ and have to add the CFIM of the prior $P(\pphi)$ itself. But only the expected CFIM scales with the number of samples $n$. We can therefore neglect the contribution of the prior as long as we are not considering sample-frugal estimation. The construction of the cost function proceeds as in the main text by applying a positive semi-definite weighting matrix to both sides of Eq.~\eqref{eqn:van_trees} and using the right hand side as the cost function.

If the variational approach is directly applied to the sensing platform, the Bayesian nature of the underlying parameters is automatically taken into account. If the sensing is simulated on a quantum device, we can include the prior distribution by performing multiple runs of the experiment where $\pphi$ is freshly pulled from $P(\pphi)$ during each run. The CFIM can then be estimated from the results of the different runs.

\subsection{Including time dependence}\label{sec:ext_time_dependence}
While the noisy unitary model is quite general, in many relevant cases both the encoded parameters and the noise depend on the exposure time $t$ to the physical encoding process.
Due to the interdependence of noise and unitary evolution, we can no longer use the parameter-shift rule. But we can still resort to finite differences to obtain the derivatives, both on the sensing platform and in the emulated context. 
In the simulation part, we have the additional possibility to compute the derivative directly. We will now assume that both the POVM (via the error channel) and the noise-free state depend on  $t$. The time-derivative of the probabilities is then
\begin{align}
    \partial_t p_l = \Tr\{ (\partial_t \Pi_l') \rho \} + \Tr \{ \Pi_l' (\partial_t \rho) \}.
\end{align}
To calculate the channel-dependent part, we need the time-derivative of the error channel on the classical side of the optimization
\begin{align}
    \frac{\partial}{\partial t}\calN^{\dagger}[\Pi_l] = \dot{\Pi}_l'.
\end{align}
For the calculation of the state-dependent part, we can re-use the output probabilities we already calculated
\begin{align}\label{eqn:time_dependence_chain_rule}
    \frac{\partial}{\partial t}\rho(\pphi) = \sum_j \frac{\partial \phi_j}{\partial t} \partial_j \rho(\pphi).
\end{align}
If the parameter-encoding is parameter-shift differentiable, this is already sufficient. If not, we also have to take care of the time-derivatives of the partial derivatives. The desired quantity is
\begin{align}
    \partial_t \partial_j p_l &= \partial_t \Tr \{ \Delta_l^j \rho \} \\
    &= \Tr \{ (\partial_t \Delta_l^j) \rho \} + \Tr \{ \Delta_l^j (\partial_t \rho) \}.
     \nonumber
\end{align}
The first term encodes the error dependence, we analyze it using the definition of the derivative operators \eqref{eqn:def_derivative_ops}
\begin{align}
    \Tr \{ (\partial_t \Delta_l^j) \rho \} &= -i \Tr \{ (\partial_t [\Pi_l', \calH_j]) \rho \}
    = -i \Tr \{ ( [\partial_t\Pi_l', \calH_j]) \rho \}  \\
    &= -i \Tr \{ [\dot{\Pi}_l', \calH_j] \rho \}
    = \Tr \{ \dot{\Delta}_l^j \rho \},
     \nonumber
\end{align}
where we have defined 
\begin{align}
    \dot{\Delta}_l^j \coloneqq -i [\dot{\Pi}_l', \calH_j].
\end{align}
The second term encodes the dependence of the encoded state. We can analyze it by using Eq.~\eqref{eqn:time_dependence_chain_rule} and the differentiation rule Eq. \eqref{eqn:derivative_h_ops}:
\begin{align}
    \Tr \{ \Delta_l^j (\partial_t \rho) \} &= \sum_j \frac{\partial \phi_j}{\partial t} \Tr \{ \Delta_l^j  \partial_j \rho \} \\
    &=-i \sum_j \frac{\partial \phi_j}{\partial t} \Tr \{ \Delta_l^j \calH_j\rho - \Delta_l^j \rho \calH_j\}  \nonumber\\
    &=-i \sum_j \frac{\partial \phi_j}{\partial t} \Tr \{ \Delta_l^j \calH_j\rho - \calH_j\Delta_l^j \rho \} \nonumber \\
    &=-i \sum_j \frac{\partial \phi_j}{\partial t} \Tr \{ [\Delta_l^j, \calH_j] \rho\}  \nonumber \\
    &= \Tr \{ \ddot{\Delta}_l^j\rho\},  \nonumber
\end{align}
where we have defined
\begin{align}
    \ddot{\Delta}_l^j &\coloneqq -i \sum_j \frac{\partial \phi_j}{\partial t} [\Delta_l^j, \calH_j] \\
    &= - \sum_j \frac{\partial \phi_j}{\partial t} [[\Pi_l', \calH_j], \calH_j].
     \nonumber
\end{align}
With of the above definitions we arrive at
\begin{align}
    \partial_t \partial_j p_l &= \Tr \{ \dot{\Delta}_l^j \rho \} + \Tr \{ \ddot{\Delta}_l^j \rho \}.
\end{align}
We now have again reformulated all derivatives as expectation values at the cost of more complicated expressions. The derivatives can then be used in conjunction with Eq.~\eqref{eqn:cfim} to obtain the CFIM and subsequently the cost function.

\subsection{Evaluation of the quantum Cram{\'e}r-Rao bound}\label{sec:ext_qcrb}
Helstrom and Holevo \cite{helstrom1976quantum,holevo1982probabilistic} have shown that there exists a more fundamental bound than the Cram{\'e}r-Rao bound that is independent of the chosen POVM, the \emph{Quantum Cram{\'e}r-Rao bound (QCRB)}
\begin{align}\label{eqn:qcrb}
    \operatornamewithlimits{Cov}[\hat{\vvarphi}] \geq \frac{1}{n} \calF_{\ff}^{-1},
\end{align}
where $\calF_{\pphi}$ is the \emph{quantum Fisher information matrix (QFIM)} defined as
\begin{align}\label{eqn:qfim_def}
    [\calF_{\pphi}]_{j,k} &\coloneqq  \frac{1}{2} \Tr \{ \rho(\pphi) \{L_j, L_k \} \},
\end{align}
where the symmetric logarithmic derivative (SLD) operators $L_j$ are implicitly defined by the equation
\begin{align}\label{eqn:def_sld}
    \frac{\partial}{\partial \phi_j} \rho(\pphi) = \frac{1}{2}\{\rho(\pphi), L_j\}.
\end{align}
The properties of the QFIM and its various applications in quantum metrology are discussed in detail in Ref.~\cite{liu2019quantum}.

It would be satisfying to also be able to evaluate the fundamental precision limit for the given sensing problem that is given by the QCRB. Nonetheless, the calculation of the Quantum Cram{\'e}r-Rao bound is much more intricate than the calculation of the classical Cram{\'e}r-Rao bound. Formulas exist for unitary encoding processes, but calculating the QFIM for an arbitrary mixed probe state requires either full tomography or solving an infeasibly large semidefinite program \cite{katariya2020geometric}. 

For a pure probe state on the other hand, the calculation of the QFIM is straightforward, as it reduces to the fourfold covariance matrix of the generators \cite{liu2019quantum}
\begin{align}\label{eqn:qfim_pure_cov}
    [\calF_{\pphi}]_{j,k} &= 4 \operatorname{Cov}_{\ket{\psi}}(\calH_j, \calH_k) 
\end{align}
evaluated on the final state vector $\ket{\psi(\ttheta, \pphi)}$. It can thus be estimated from measurements of the observables $\{ \calH_j \}$ and their products and therefore also be differentiated via the parameter-shift rule. We can thus construct a similar cost function to Eq.~\eqref{eqn:cost_function}, replacing the CFIM with the QFIM, and carry out an optimization. Current quantum devices usually do not reach the purity required to perform this optimization, so we have to resort to classical simulation instead.

Convexity of the quantum Fisher information ensures that the maximum Fisher information is attained on pure states. We can thus expect the pure picture to capture the optimal sensing performance when all noise is suppressed. This can serve as a benchmark to judge the performance of the probes obtained by the methods outlined above.

\section{Numerical experiments}\label{sec:app_numerics}

\subsection{Ramsay spectroscopy}
\emph{Circuit.}
The following quantum circuit was used to perform the numerical simulations:
\begin{center}
\begin{comment}
    \begin{quantikz}[row sep={0.75cm,between origins}]
        \lstick{$\ket{0}$} & \gate[wires=3]{P(\ttheta)}\gategroup[wires=3,steps=1,style={dashed,rounded corners, inner sep=3pt},background]{{Preparation}} & \qw & \gate{R_Z(\phi_1)}\gategroup[wires=3,steps=2,style={dashed,rounded corners, inner sep=3pt},background]{{Encoding}} & \push{\calD(p)} & \qw & \gate[wires=3]{M(\mmu)}\gategroup[wires=3,steps=2,style={dashed,rounded corners, inner sep=3pt},background]{{Measurement}} &  \meterD{\nicefrac{0}{1}} \\
        \lstick{$\ket{0}$} & \qw & \qw & \gate{R_Z(\phi_2)} & \push{\calD(p)} & \qw & \qw &  \meterD{\nicefrac{0}{1}} \\
        \lstick{$\ket{0}$} & \qw & \qw & \gate{R_Z(\phi_3)} & \push{\calD(p)} & \qw & \qw &  \meterD{\nicefrac{0}{1}}
    \end{quantikz}
    \end{comment}
     \includegraphics[width=.5\columnwidth]{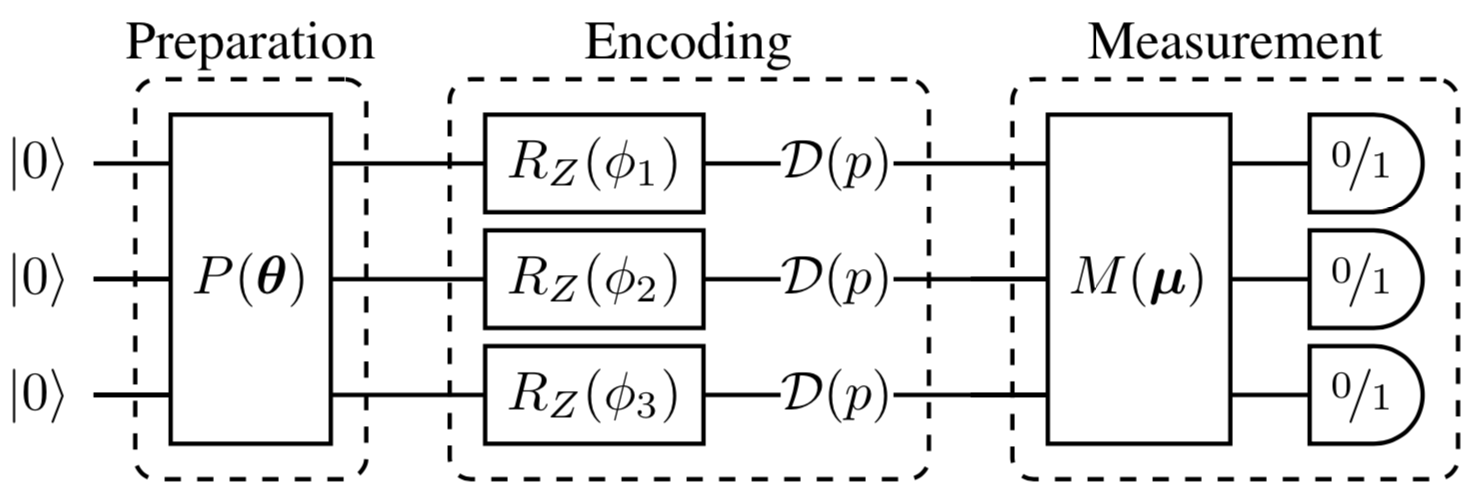}
\end{center}
The preparation and measurement circuits considered are detailed below. The encoding models the phase shift picked up during the interaction with additional dephasing noise $\calD$ with dephasing probability $p$.
We consider the following preparation circuit that parametrizes any pure three qubit state with the minimal number of parameters:
\begin{center}
\begin{comment}
    \begin{tikzpicture}
    \node[scale=.75] {
    \begin{quantikz}[row sep={0.75cm,between origins},transparent]
        \qw & \gate{R_{X}(\theta_1)} & \gate{R_{Y}(\theta_2)} & \qw & \qw & \gate[wires=2]{R_{XX}(\theta_9)} & \gate[wires=2]{R_{XY}(\theta_{10})} & \gate[wires=3,label style={yshift=0.3cm}]{R_{XX}(\theta_{11})} & \gate[wires=3,label style={yshift=0.3cm}]{R_{XY}(\theta_{12})} & \gate[wires=3]{R_{XXX}(\theta_{13})} & \gate[wires=3]{R_{XXY}(\theta_{14})} & \qw \\
        \qw & \gate{R_{X}(\theta_3)} & \gate{R_{Y}(\theta_4)} & \gate[wires=2]{R_{XX}(\theta_7)} & \gate[wires=2]{R_{XY}(\theta_8)} & \qw & \qw & \linethrough & \linethrough & \qw& \qw & \qw\\
        \qw & \gate{R_{X}(\theta_5)} & \gate{R_{Y}(\theta_6)} & \qw & \qw & \qw & \qw & \qw & \qw & \qw& \qw & \qw
    \end{quantikz}
    };
    \end{tikzpicture}
\end{comment}
 \includegraphics[width=.85\columnwidth]{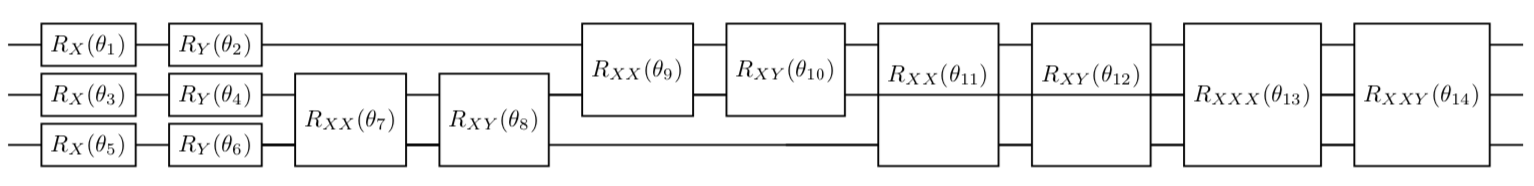}
\end{center}
Here, $R_P(\theta)$ is a rotation generated by the Pauli word $P$.
The measurement is parametrized by a local unitary transformation:
\begin{center}
\begin{comment}
    \begin{quantikz}[row sep={0.75cm,between origins}]
        \qw & \gate[wires=3]{M(\mmu)} & \qw \\
        \qw & \qw & \qw \\
        \qw & \qw & \qw
    \end{quantikz}=
    \begin{quantikz}[row sep={0.75cm,between origins}]
        \qw & \gate{R_Z(\mu_1)} & \gate{R_Y(\mu_2)}&\gate{R_Z(\mu_3)} & \qw \\
        \qw & \gate{R_Z(\mu_4)} & \gate{R_Y(\mu_5)}&\gate{R_Z(\mu_6)} & \qw \\
        \qw & \gate{R_Z(\mu_7)} & \gate{R_Y(\mu_8)}&\gate{R_Z(\mu_9)} & \qw
    \end{quantikz}
    \end{comment}
     \includegraphics[width=.5\columnwidth]{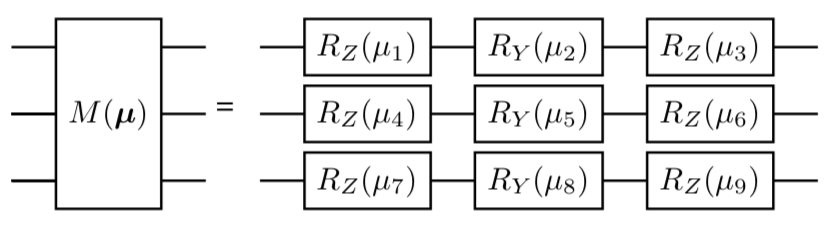}
\end{center}

\emph{Optimization.}
The encoded phases were fixed at $\phi_j = {\pi}/{6}$, the optimal parameters for a GHZ probe~\cite{huelga1997improvement}. The dephasing probability of the encoding was varied in the interval $p \in [0, \num{0.5}]$. The weights of the variational circuit was initialized at random, with each weight $\theta_j$, $\mu_k$ randomly drawn from a uniform distribution over $[0,2\pi]$. The optimization was executed using gradient descent for 1000 steps with an initial step size of \num{0.05}. A regularization term of $\num{1e-10} \times \bbI$ was added to the computed CFIM before inversion during the calculation of the cost function to avoid inverting a singular matrix. Multiple runs of the above optimization were repeated and the best results selected.

\subsection{NV trilateration}
We consider a setup with three negatively charged NV centers (NV1, NV2, NV3) located in a diamond sample at positions $\boldsymbol{r}_j=(x_j, y_j, z_j)$. We assume that these positions are known due to prior characterization of the system. The sensing task is to locate the position of a target spin J at an unknown position $\boldsymbol{r}=(x,y,z)$. As in Ref.~\cite{sushkov2014magnetic}, we are using the $m_s=0$ and $m_s=-1$ ground states of the NV centers for the sensing task.

We consider the interactions between the NV centers and the target spin to be governed by the following dipole interaction Hamiltonian under the secular approximation~\cite{sushkov2014magnetic}
\begin{align}
    H = \hbar \Delta \sum_j (S_j^n)^2 + \hbar \gamma_e \sum_j (\mathbf{B}\mathbf{n}_j) S_j^n + \hbar \gamma_e B J^b + \sum_j \frac{\hbar^2 \mu_0 \gamma_e^2}{4\pi \delta r_j^3}S_j^n J^b [\mathbf{n}_j \mathbf{b} - (\mathbf{n}_j \mathbf{e}_j)(\mathbf{b} \mathbf{e}_j)]
\end{align}
where $S_n^f$ is the spin operator of the $j$'th sensing spin projected on its symmetry axis, which has unit vector $\mathbf{n}_j$. The spin to be sensed (target spin) has spin operator $J^b$ projected on the magnetic field axis ($\mathbf{B}$ with unit vector $\mathbf{b}$). The distance between the $j$'th sensing spin and the target spin is $\delta r_j$ and has corresponding unit vector $\mathbf{e}_j$. The parameters $\Delta,\gamma_e,$ and $\mu_e$ are the zero field splitting of the NV ground states, the electron's gyromagnetic ratio, and the magnetic constant, respectively. Note that we have neglected interaction between the sensing spins, assuming that they can be decoupled by suitable pulse sequences~\cite{Abobeih2019}. Performing the DEER sequence, the Hamiltonian simplifies to
\begin{align}
    H_{\text{opt}} &= \sum_j \left( \frac{\hbar^2 \mu_0 \gamma_e^2}{4\pi \delta r_j^3}S_j^n J^b [\boldsymbol{n}_j \boldsymbol{b} - (\boldsymbol{n}_j \boldsymbol{e}_j)(\boldsymbol{b} \boldsymbol{e}_j)] \right),
\end{align}
where we have neglected the local evolution of the NV centers ($\sum_j \hbar \Delta (S_j^n)^2$) since this only adds a constant phase that can be mitigated.

%\emph{Jacobian.}
From the Hamiltonian, we can identify the independent phases
\begin{align}
    H= \sum_j \phi_j(\rr) S_j^n,
\end{align}
where
\begin{align}
    \phi_j(\rr) =  \frac{C}{\delta r_j^3} [\boldsymbol{n}_j \boldsymbol{b} - (\boldsymbol{n}_j \boldsymbol{e}_j)(\boldsymbol{b} \boldsymbol{e}_j)],
\end{align}
and we have absorbed all constants into $C$. With this expression at hand, we use the autodifferentiation features of sympy~\cite{sympy} to calculate the Jacobian $J_{j,k} = {\partial \phi_j}/{\partial r_k}$. The Jacobian necessary for our optimization can then be calculated by inverting $J$.

\emph{Circuit.}
The following quantum circuit is used to perform the numerical simulations:
\begin{center}
\begin{comment}
    \begin{quantikz}[row sep={0.75cm,between origins}]
        \lstick{$\ket{0}$} & \gate[wires=3]{P(\ttheta)}\gategroup[wires=3,steps=1,style={dashed,rounded corners, inner sep=3pt},background]{{Preparation}} & \qw & \gate{R_{\nn_1}(\phi_1)}\gategroup[wires=3,steps=2,style={dashed,rounded corners, inner sep=3pt},background]{{Encoding}} & \push{\calD(p_e)} & \qw & \gate[wires=3]{M(\mmu)}\gategroup[wires=3,steps=2,style={dashed,rounded corners, inner sep=3pt},background]{{Measurement}} &  \meterD{\nicefrac{0}{1}} \\
        \lstick{$\ket{0}$} & \qw & \qw & \gate{R_{\nn_2}(\phi_2)} & \push{\calD(p_e)} & \qw & \qw &  \meterD{\nicefrac{0}{1}} \\
        \lstick{$\ket{0}$} & \qw & \qw & \gate{R_{\nn_3}(\phi_3)} & \push{\calD(p_e)} & \qw & \qw &  \meterD{\nicefrac{0}{1}}
    \end{quantikz}
    \end{comment}
     \includegraphics[width=.5\columnwidth]{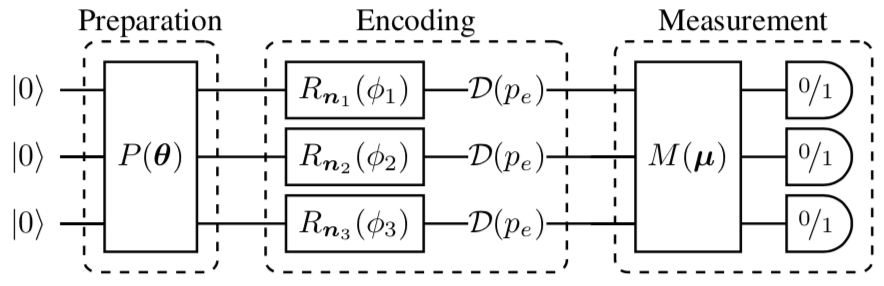}
\end{center}
The preparation and measurement circuits considered are detailed below. The encoding models the evolution under the aforementioned Hamiltonian with additional dephasing noise $\calD$ with dephasing probability $p_e$.

We seek to study the trade-off between a shallow entangled probe and a local probe at different gate noise levels. The gate noise will be modeled as depolarizing noise $\calP$ with depolarization probability $p_g$ that is equal for single- and two-qubit gates.
We consider two different state preparations which use the following primitives:
First, an arbitrary single qubit operation
\begin{center}
\begin{comment}
    \begin{quantikz}[row sep={0.75cm,between origins}]
        \qw & \gate{P_1(\theta_1, \theta_2, \theta_3)} & \qw
    \end{quantikz}=
    \begin{quantikz}[row sep={0.75cm,between origins}]
        \qw & \gate{R_Z(\theta_1)} & \gate{R_Y(\theta_2)}&\gate{R_Z(\theta_3)}& \push{\calP(p_g)}& \qw
    \end{quantikz}
    \end{comment}
     \includegraphics[width=.5\columnwidth]{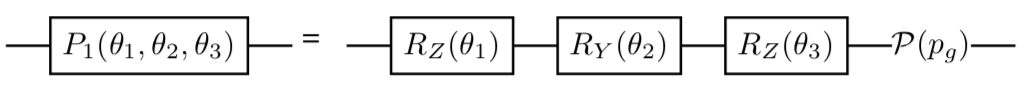}
\end{center}
Second, an entangling two-qubit operation
\begin{center}
\begin{comment}
    \begin{quantikz}[row sep={0.75cm,between origins}]
        \qw & \gate[wires=2]{P_2(\theta)} & \qw \\
        \qw & \qw & \qw
    \end{quantikz}=
    \begin{quantikz}[row sep={0.75cm,between origins}]
        \qw & \ctrl{1} & \push{\calP(p_g)} & \qw \\
        \qw & \gate{R_X(\theta)} & \push{\calP(p_g)}&\qw
    \end{quantikz}
    \end{comment}
     \includegraphics[width=.4\columnwidth]{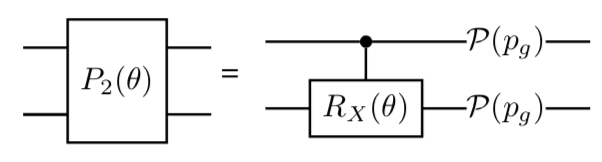}
\end{center}
The first preparation we consider is a simple local state preparation
\begin{center}
\begin{comment}
    \begin{quantikz}[row sep={0.75cm,between origins}]
        \qw & \gate[wires=3]{P(\ttheta)} & \qw \\
        \qw & \qw & \qw \\
        \qw & \qw & \qw
    \end{quantikz}=
    \begin{quantikz}[row sep={0.75cm,between origins}]
        \qw & \gate{P_1(\theta_1, \theta_2, \theta_3)} & \qw \\
        \qw & \gate{P_1(\theta_4, \theta_5, \theta_6)} & \qw \\
        \qw & \gate{P_1(\theta_7, \theta_8, \theta_9)} & \qw
    \end{quantikz}
    \end{comment}
     \includegraphics[width=.33\columnwidth]{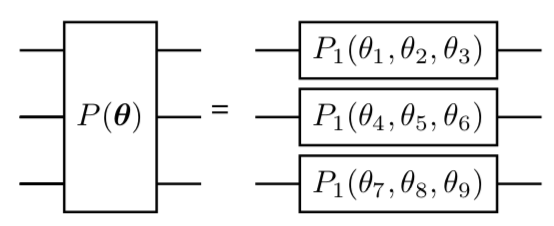}
\end{center}
The comparison is made with a shallow entangling state preparation that extends the local state preparation by executing two entangling operations
\begin{center}
\begin{comment}
    \begin{quantikz}[row sep={0.75cm,between origins}]
        \qw & \gate[wires=3]{P(\ttheta)} & \qw \\
        \qw & \qw & \qw \\
        \qw & \qw & \qw
    \end{quantikz}=
    \begin{quantikz}[row sep={0.75cm,between origins},transparent]
        \qw & \gate{P_1(\theta_1, \theta_2, \theta_3)} & \gate[wires=2]{P_2(\theta_{10})}  & \qw &\qw \\
        \qw & \gate{P_1(\theta_4, \theta_5, \theta_6)} & \qw & \gate[wires=2]{P_2(\theta_{11})} &\qw \\
        \qw & \gate{P_1(\theta_7, \theta_8, \theta_9)} & \qw  & \qw &\qw
    \end{quantikz}
    \end{comment}
     \includegraphics[width=.5\columnwidth]{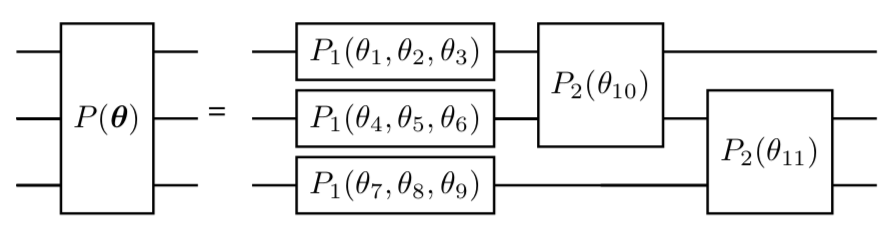}
\end{center}
We note that this state preparation can create a GHZ state.
For both preparations, we use a local unitary to parametrize the measurement
\begin{center}
\begin{comment}
    \begin{quantikz}[row sep={0.75cm,between origins}]
        \qw & \gate[wires=3]{M(\mmu)} & \qw \\
        \qw & \qw & \qw \\
        \qw & \qw & \qw
    \end{quantikz}=
    \begin{quantikz}[row sep={0.75cm,between origins}]
        \qw & \gate{P_1(\mu_1, \mu_2, \mu_3)} & \qw \\
        \qw & \gate{P_1(\mu_4, \mu_5, \mu_6)} & \qw \\
        \qw & \gate{P_1(\mu_7, \mu_8, \mu_9)} & \qw
    \end{quantikz}
    \end{comment}
     \includegraphics[width=.35\columnwidth]{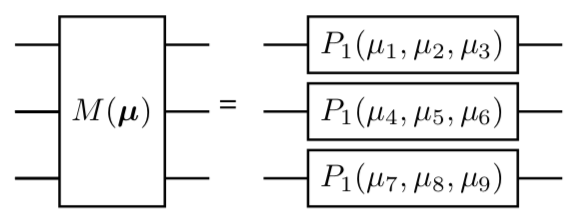}
\end{center}

\emph{Optimization.}
\begin{table}\label{tab:nv_optimization_data}\centering
\begin{tabular}{@{}llcccc@{}}\toprule
& & NV\textsubscript{1} & NV\textsubscript{2} & NV\textsubscript{3} & Target Spin \\ \midrule
Location & $\rr$ & $(0,0,0) $ & $(1,2,3)$ & $(4,1,1)$ & $(2,0,2)$ \\
Symmetry Axis & $\operatorname{arg}(\nn)$ & $(\num{0.3} \pi, \num{0.25} \pi) $ & $(\num{0.7} \pi, \num{-0.15} \pi)$ & $(\num{-0.2} \pi, \num{1.2} \pi)$ & $(0,0)$ \\
\bottomrule
\end{tabular}
\caption{Ground truth used for the NV trilateration experiment. The location is given in unit-free coordinates, the symmetry axis is given in spherical coordinates. Note that the symmetry axis in the case of the target spin is the precession axis identical to the magnetic field direction.}
\end{table}
The ground truth used for the numerical experiment is shown in Tab.~I. The evolution time was fixed so that the total accumulated phase $\phi_1 + \phi_2 + \phi_3 = \pi/2$. The dephasing probability of the encoding was fixed at $p_e = \num{0.1}$ while the gate depolarization probability was varied in the interval $p_g \in [0, \num{0.1}]$.

The weights of the variational circuits have been initialized at random, with each weight $\theta_j$, $\mu_k$ randomly drawn from a uniform distribution over $[0,2\pi]$. The optimization was executed using gradient descent for 1000 steps with an initial step size of \num{0.01}. To avoid issues with overly large step sizes we used the following strategy: every time an increase of the cost function after a step was detected, the step size was multiplied with a factor of \num{0.7} and the step repeated. A regularization term of $\num{1e-10} \times \bbI$ was added to the computed CFIM before inversion during the calculation of the cost function to avoid inverting a singular matrix. Multiple runs of the above optimization were repeated and the best results selected.

\section{CO\textsubscript{2} Emission Table}\label{sec:appconduct}
\begin{table}[h]
\label{tab:cotwo}
\begin{tabular}[b]{l c}
\toprule
\textbf{Numerical simulations} & \\
\midrule
Total Kernel Hours [$\mathrm{h}$]& 29071 \\
Thermal Design Power Per Kernel [$\mathrm{W}$]& 5.75\\
Total Energy Consumption Simulations [$\mathrm{kWh}$] & 167.2 \\
Average Emission Of CO$_2$ In Germany [$\mathrm{kg/kWh}$]& 0.56\\
Total CO$_2$ Emission For Numerical Simulations [$\mathrm{kg}$] & 94 \\
\midrule
\textbf{Transport} & \\
\midrule
Total CO$_2$ Emission For Transport [$\mathrm{kg}$] & 0\\
\midrule
Total CO$_2$ Emission [$\mathrm{kg}$] & 94 \\
Were The Emissions Offset? & Yes \\
\bottomrule
\end{tabular}
\end{table}

\end{document}